\documentclass[conference]{IEEEtran}
\IEEEoverridecommandlockouts

\usepackage{cite}
\usepackage{amsmath,amssymb,amsfonts}
\usepackage{macros}
\newtheorem{remark}{Remark}
\newtheorem{assumption}{Assumption}
\newtheorem{problem}{Problem}
\newtheorem{theorem}{Theorem}

\usepackage{subcaption}
\usepackage{float}
\usepackage{empheq} 
\usepackage{mdframed} 
\usepackage{epsfig} 
\usepackage{algorithmic}
\usepackage{graphicx}
\usepackage{textcomp}
\usepackage{xcolor}
\usepackage[hidelinks]{hyperref}
\def\BibTeX{{\rm B\kern-.05em{\sc i\kern-.025em b}\kern-.08em
    T\kern-.1667em\lower.7ex\hbox{E}\kern-.125emX}}
\begin{document}

\title{Quasi-Static Control of Discrete Cosserat Rod\\
{
}
\thanks{}
}

\author{\IEEEauthorblockN{Srishti Siddharth}
\IEEEauthorblockA{\textit{Centre for Systems and Control} \\
\textit{Indian Institute of Technology Bombay}\\
Mumbai, India \\
214230005@iitb.ac.in}
\and
\IEEEauthorblockN{Domenico Campolo}
\IEEEauthorblockA{\textit{School of Mechanical and } \\
\textit{Aerospace Engineering} \\
\textit{Nanyang Technological University}\\
Singapore\\
d.campolo@ntu.edu.sg}
\and
\IEEEauthorblockN{Ravi Banavar}
\IEEEauthorblockA{\textit{Centre for Systems and Control} \\
\textit{Indian Institute of Technology Bombay}\\
Mumbai, India \\
banavar@iitb.ac.in}
}

\maketitle

\begin{abstract}
In this paper, we design feedback control laws for soft robots modelled using the Cosserat rod theory,  which is spatially discretised using the Piecewise Constant Strain (PCS) approach. The PCS approach approximates the nonlinear PDEs describing the Cosserat rod by a finite-dimensional system of nonlinear ODEs. This simplification results in a model describing soft robots which is similar to the serial rigid-link manipulators. We design feedback control laws for the quasi-static PCS model by using external  wrenches as control inputs. The control laws are designed based on feedback linearisation in strain and task spaces. An extensive set of numerical results demonstrates the performance of the control laws for end-effector trajectory tracking and shape control of soft robots.
\end{abstract}

\begin{IEEEkeywords}
Cosserat rod, Piecewise Constant Strain model, Quasi-Static Control, Soft Robots.
\end{IEEEkeywords}
\section{Introduction}
\label{sec:introduction}

\IEEEPARstart{I}{n} many engineering and scientific applications, geometrically exact rods are used to obtain accurate models for highly deformable slender rods. Such flexible structures find applications in continuum robots \cite{Gravagne,Cosimo_2023}, DNA strands experiencing various mechanical forces \cite{Forth_2008} and flexible microrobots for endovascular surgery \cite{Bailly_2011}. The geometrically exact rod \cite{Antman} or the Cosserat rod provides a mathematically elegant and high-fidelity model for such slender rods undergoing large deformations, unlike Euler-Bernoulli and Timoshenko beam theories, which account for small deformations.

The variational formulation of the geometrically exact rod and its numerical aspects were presented in the extensive work of Simo et al. given in \cite{Simo_1986} and \cite{Simo_1988}. 
The book by Antman \cite{Antman} presents the theory of planar and spatial geometrically exact rods. The modern treatment of geometrically exact rods is directed towards algorithm design and implementation based on the special Cosserat rod theory, which describes the motion of the rod using the positions of material points lying on its centerline and the orientations of rigid cross-sections attached at each material point. 
%

The soft robotics community has proposed various numerical models
based on the Cosserat rod theory. The Piecewise Constant Curvature (PCC) model \cite{Falk_2015,Jones_2006} assumes that a continuum robot can be represented by a series of connected circular arcs, which is not valid in the case of out-of-plane external loads. The Piecewise Constant Strain (PCS) model introduced by Renda et al. \cite{Renda_2016,Renda_2017,Renda_2018}, assumes that a soft robot consists of multiple sections, each with constant linear and angular strains. The resulting geometric structure of the model is similar to the traditional rigid-link manipulators, thus unifying the theory of rigid-link and soft robots \cite{park_lynch,Murray}. This allows us to extend the control design theory developed for rigid-link manipulators to soft robots. Other strain based models include the Geometric Variable Strain (GVS) model \cite{Boyer_Renda_2021} and the Piecewise Linear Strain (PLS) model \cite{Li_2023}, which include discrete statics and dynamics models of the Cosserat rod based on strain basis functions.

Several model-based controllers have been proposed for soft robots, which are inherently underactuated. 
Lyapunov function-based control laws have been developed in \cite{Apoorva_2011} and \cite{Molu_Chen} for dynamics of soft robots. An inverse kinematics-based controller for the GVS model has been developed in \cite{RendaIK}. Finally, a robust control law for the PLS static model has been proposed in \cite{Li_Renda}. For the sake of conciseness, we do not provide an extensive literature survey on control of soft robots. In this paper, we will focus on design of control laws for the quasi-static PCS model. Thus, the end-effector trajectory tracking and shape control tasks will be achieved by considering slow motion of the robot to satisfy the conditions for quasi-static operation. 
%

\textit{Contributions:} The contribution of the paper is to primarily develop the model-based control theory of soft robots. This is in contrast to the modern development of soft robots which focuses on learning-based control strategies. We formulate the inverse kinematics for soft robots, which provides various strains for a given end-effector pose of the robot. We then derive simple and effective control schemes based on feedback linearisation in strain and task spaces, while remaining within the framework of the Cosserat rod theory. We present three numerical experiments to validate the control approaches.

\textit{Outline:} In Section \ref{sec:continuos_time_statics}, we present a brief review of the statics of the Cosserat rod. In Section \ref{sec:PCS}, we recall the theory of the PCS static model. In Section \ref{sec:PCS_IK}, we propose the inverse kinematics algorithm, followed by the design of feedback control laws in Section \ref{sec:control}. In Section \ref{sec:numerics}, we present and discuss three numerical experiments to validate our control design approaches. Section \ref{sec:conclusion} includes concluding remarks and future work.
\section{Statics of the Cosserat rod}
\label{sec:continuos_time_statics}
\subsection{Kinematics of the Rod}
The Cosserat rod theory accounts for stretching, shearing, bending and twisting strains of a rod. Let $\Linit$ be the length of the undeformed rod and let $\arclength \in [0,\Linit]$ be the arc-length parameter which is used to identify the material points on the centerline of the rod. The configuration of the rod is described with respect to an inertial frame $\{ e_1,e_2,e_3\}$, where $e_1,e_2,e_3$ are the standard unit vectors in $\R^3$. The undeformed or reference configuration of the rod is assumed to be the one in which the rod is lying in a straight configuration along the $e_1$ axis. The configuration of the rod at any time instant is completely described by the position of each material point lying on the centerline, denoted by $\x(\arclength,t) \in \R^3$ and the orientation of the cross-section attached to each point, denoted by $\rot(\arclength,t) \in \SOthree$. Thus, the configuration of the rod at a time instant $t$, defined by the map $\g: [0,\Linit] \to SE(3)$, is given as
\begin{equation}
    \g(\arclength,t) = \begin{bmatrix}
\rot(\arclength,t) & \x(\arclength,t) \\ 
0^T & 1
    \end{bmatrix}
    \label{eq:config}
\end{equation}
for all $\arclength \in [0,\Linit]$. In addition to the configuration described in \eqref{eq:config}, the kinematics of the rod is described by its strains and velocities. The velocity twist, expressed in the body-fixed frame (or the material frame) of the rod is given by
\begin{equation}
    \widehat{\velocity}(\arclength,t) = \g^{-1}(\arclength,t) \frac{\partial \g}{\partial t}(\arclength,t) = \g^{-1}(\arclength,t) \Dot{\g}(\arclength,t), 
\end{equation}
where
\begin{equation}
    \velocity (\arclength,t) = \begin{bmatrix}
        \angvel^T (\arclength,t) & \linvel^T (\arclength,t)
    \end{bmatrix}^T \in \R^6,
    \label{eq:velocity}
\end{equation}
$\widehat{(\cdot)}:\R^6 \to \sethree$ is the hat map, 
$\linvel (\arclength,t) \in \R^3$ and $\angvel (\arclength,t) \in \R^3$ are the linear and angular velocities, respectively. 

Similarly, the strain twist, expressed in the material frame, is given by 
\begin{equation}
    \widehat{\strain} (\arclength,t) = \g^{-1}(\arclength,t) \frac{\partial \g}{\partial \arclength}(\arclength,t)  = \g^{-1} (\arclength,t) \g^{\prime} (\arclength,t),
\end{equation}
where
\begin{equation}
    \strain (\arclength,t) = \begin{bmatrix}
        \angstr^T (\arclength,t) & \linstr^T (\arclength,t)
    \end{bmatrix}^T \in \R^6,
    \label{eq:strain}
\end{equation}
$\linstr (\arclength,t) \in \R^3$ and $\angstr (\arclength,t) \in \R^3$ are the linear and angular strains, respectively. The first component of $\linstr (\arclength,t)$ represents the stretching strain along $e_1$, and the remaining components represent the shearing strains along $e_2$ and $e_3$ axes, respectively. Similarly, the first component of $\angstr (\arclength,t)$ represents the twisting strain about the $e_1$ axis, and the remaining components represent bending strains about $e_2$ and $e_3$ axes, respectively. 
\begin{figure}[H]
    \centering
    \includegraphics[width=7cm,height=7cm,keepaspectratio]{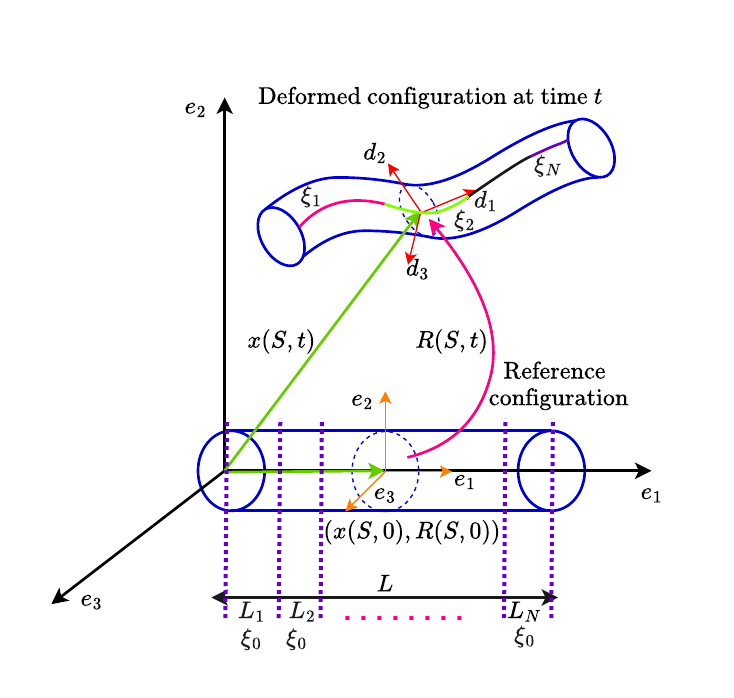}
    \caption{Deformation of the three-dimensional Cosserat rod discretised by the PCS approach.}
    \label{fig:cosserat_rod}
\end{figure}
\begin{remark}
    For the sake of concise notation, the space and time dependence of the quantities described in \eqref{eq:config} - \eqref{eq:strain} will be sometimes ignored. 
\end{remark}
The relationship between the velocity in \eqref{eq:velocity} and the strain in \eqref{eq:strain} is given by
\begin{equation}
    \velocity^{\prime} = \Dot{\strain} - \ad_{\strain} \velocity
    \label{eq:velevolution}
\end{equation}
due to the equality of the mixed partial derivatives, i.e., $(\Dot{\g})^{\prime} = \Dot{(\g^{\prime})}$. Here, $\ad_{\strain}$ is the adjoint map for the strain twist vector described in \cite{Renda_2018}.
\subsection{Statics of the rod}
The strain potential energy is stored in the rod due to various modes of deformation. The strain potential energy of the rod at time instant $t$ is given as
\begin{equation}
    \mathcal{U}(t) = \int_{0}^{\Linit} \frac{1}{2} (\strain (\arclength) - \strain_{0})^T \Sigma (\arclength)(\strain (\arclength) - \strain_{0}) \, d \arclength, 
    \label{eq:contn_PE}
\end{equation}
where $\strain_{0}= \begin{bmatrix}
    0 & 0 & 0 & 1 & 0 & 0
\end{bmatrix}^T$ is the strain for the reference configuration and $\Sigma(\arclength) =\text{diag}(GJ_{x}(\arclength), EI_{y}(\arclength), EI_{z}(\arclength), EA(\arclength), GA(\arclength), GA(\arclength))$ is the stiffness matrix of the rod, $E$ is the Young's modulus and $G$ is the shear modulus of the rod. The relation between the moduli is given by $G = \frac{E}{2(1+\nu)}$
where $\nu$ is the Poisson's ratio of the rod \cite{Antman}.

The cross-sectional area of the rod at $\arclength$ is denoted by $A(\arclength)$ and the second moment of inertia is given by $\secondmoi(\arclength) = \text{diag}(J_x(\arclength), J_y(\arclength), J_z(\arclength))$. For a circular cross-section, we have $\secondmoi(\arclength) = \text{diag}(2,1,1)\frac{A^2(\arclength)}{4\pi}$.
Assuming a linear visco-elastic constitutive relation, the material stress, also called the wrench of internal elastic forces, is given as
\begin{equation}
    F_i(\arclength,t) = \stiffness(\arclength) (\strain(\arclength,t)-\strain_0) + \damping \Dot{\strain}(\arclength,t),  
\end{equation}
where $\damping = \text{diag}(J_x,3J_y,3J_z,3A,A,A)\shearv$ and $\shearv$ is the shear viscosity modulus. The internal elastic wrench can be written as $F_i = \begin{bmatrix}
    M_i^T & N_i^T
\end{bmatrix}^T$, where $M_i,N_i \in \R^3$ are the stresses, expressed in the material frame, due to angular and linear strains, respectively. Let $F_e = \begin{bmatrix}
    m^T_e & n^T_e
\end{bmatrix}^T$ be the external distributed wrench on the rod, where $m_e,n_e \in \R^3$ are the external distributed moment and force, respectively, expressed in the material frame. Thus, the quasi-static equation of motion of the rod, derived by applying the Lagrange-D'Alembert principle, is given as
\begin{equation}
    F_i^{\prime} + \ad_{\strain}^{*} F_i + F_e = 0,
    \label{eq:contn_statics}
\end{equation}
where the coadjoint map $\ad_{\strain}^{*}$ for the strain twist vector is defined in \cite{Renda_2018}.
In this paper, we consider the external distributed load due to gravity. Thus, the distributed wrench due to gravity is given as $ F_e = \mass \Ad^{-1}_{\g} \gravity$
where $\mass = \text{diag}(J_x,J_y,J_z,A,A,A) \rho$ is the screw inertia matrix, $\rho$ is the mass density of the rod, $\gravity = \begin{bmatrix}
    0&0&0&0&0&-9.81
\end{bmatrix}^T$ is the gravity twist with respect to the inertial frame and $\Ad$ is the Adjoint map for $\SEthree$ defined in \cite{Renda_2018}.
\section{Discrete model of the Static Cosserat rod: PCS approach}
\label{sec:PCS}
In this section, we will consider the Piecewise Constant Strain (PCS) approach to discretise the Cosserat rod \cite{Renda_2018}. The rod is spatially discretised into sections with constant strains, which results in analytical integration of the equations describing the continuous space-time equation of motion of the Cosserat rod. 
\subsection{Discrete kinematics of the Rod}
Let us consider that the rod of undeformed length $\Linit$ has been discretised into $\Nsec$ sections. Thus, $\arclength \in [0,\Linit]$ is divided into $\Nsec$ sections: $[0,\Linit_1], [\Linit_1,\Linit_2], ..., [\Linit_{\Nsec-1},\Linit_{\Nsec}]$, where $\Linit_{\Nsec} = \Linit$. According to the PCS approach, each section is assumed to have a constant strain, denoted by $\strain_\n$ for $\n = 1,2,..., \Nsec$. The spatial evolution of the configuration of the rod at time instant $t$ is described by
\begin{equation}
    \g^{\prime} = \g \widehat{\strain}.
    \label{eq:strain_evolution}
\end{equation}
Based on the PCS approach, the solution of \eqref{eq:strain_evolution} for $\arclength \in [\Linit_{\n-1}, \Linit_{\n}]$ can be written as 
\begin{equation}
    \g(\arclength) = \g(\Linit_{\n-1}) e^{(\arclength-L_{\n-1})\widehat{\strain}_n} = \g(\Linit_{\n-1}) \g_n(\arclength),
    \label{eq:g_exp}
\end{equation}
where the closed-form expression for $\g_n(\arclength) := e^{(\arclength-L_{\n-1})\widehat{\strain}_n}$ is given in \cite{Renda_2018}. 

%

The pose at $\arclength \in [\Linit_{n-1},\Linit_\n]$ can be found recursively from \eqref{eq:g_exp}, which gives us the product of exponentials (PoE) formula for the PCS model as
\begin{equation}
    \g(\arclength) = \g(0) e^{\Linit_1 \widehat{\strain}_1}e^{(\Linit_2-\Linit_1) \widehat{\strain}_2}...e^{(\arclength - \Linit_{\n-1})\widehat{\strain}_{\n}}.
    \label{eq:poe}
\end{equation}
The PoE formula \eqref{eq:poe} allows us to represent the Cosserat rod by $\strain_\n$ for $\n = 1,...,\Nsec$, which can be thought of as joint twists. Thus, the joint or the strain vector for a discrete rod is given by
\begin{equation}
    \joint = \begin{bmatrix}
        \strain^T_{1} \strain^T_{2}...\strain^T_{\Nsec} 
    \end{bmatrix}^T \in \R^{6\Nsec}.
\end{equation}
The velocity of the cross-section at $\arclength \in [0,\Linit]$ can be computed by integrating \eqref{eq:velevolution} for a section $[\Linit_{\n-1},\Linit_{\n}]$, i.e., by integrating
\begin{equation}
    \velocity^{\prime}(\arclength) = \Dot{\strain}_\n - \ad_{\strain_{\n}} \velocity(\arclength), \quad \arclength \in [\Linit_{\n-1},\Linit_{\n}].
    \label{eq:vel_at_n}
\end{equation}
Since $\strain_\n$ is constant at time instant $t$, therefore, \eqref{eq:vel_at_n} is a nonhomogeneous, linear matrix differential equation with constant coefficient, and it's solution is given by
\begin{equation}  \velocity(\arclength) = \Ad^{-1}_{\g_{\n}(\arclength)} (\velocity(\Linit_{\n-1}) + T_{\g_{\n}(\arclength)} \Dot{\strain}_{\n}),
    \label{eq:velocity_closed_form}
\end{equation}
where the closed-form expressions for $\Ad_{\g_{\n}(\arclength)}$ and $T_{\g_{\n}(\arclength)}$ are provided in \cite{Renda_2018}.
\begin{assumption}
    In this paper, we will consider a cantilever rod. Thus, the boundary conditions at the base of the rod are given by
    \begin{equation}
        \g(0,t) = I_4, \quad \velocity(0,t) = 0 \in \R^6, \quad \forall t \geq 0.
    \end{equation}
\end{assumption}
\begin{remark}
    $I_k \in \R^{k \times k}$  denotes the $k$ dimensional identity matrix.
\end{remark}
On computing $\velocity(\arclength)$ recursively from \eqref{eq:velocity_closed_form}, we get
\begin{equation}
    \velocity(\arclength) = \jacobian (\arclength,\joint) \Dot{\joint}, 
\end{equation}
where the closed-form expression for the geometric Jacobian, $\jacobian(\arclength,\joint) \in \R^{6 \times 6 \Nsec}$ is given in \cite{Renda_2018}.
\subsection{Discrete statics of the rod}
\label{sec:PCS_statics}
The discrete statics can be obtained by converting \eqref{eq:contn_statics} to the weak form and applying the piecewise constant strain assumption \cite{Renda_2018}. Thus, we have  
\begin{equation}
    D\Dot{\joint}+K (\joint-\joint^{*}) = \Bar{\jacobian}^{T}(\joint)\Bar{\mathcal{F}}_{ext}(t) + \mathcal{N}(\joint) \gravity,
    \label{eq:quasi_static_contl_eq}
\end{equation}
where $\joint^{*} = \begin{bmatrix}
    \strain^T_{0}&\strain^T_{0}&...&\strain^T_{0} 
\end{bmatrix}^T \in \R^{6\Nsec} $, $D = \text{diag}({l_1}\damping,{l_2}\damping,...,{l_n}\damping) \in \R^{6\Nsec \times 6\Nsec}$ is the generalised damping matrix, $K = \text{diag}({l_1}\stiffness,{l_2}\stiffness,...,{l_n}\stiffness) \in \R^{6\Nsec \times 6\Nsec}$ is the generalised stiffness matrix, $l_{\n}$ for $\n=1,...,\Nsec$ is the length of section $\n$, $\Bar{\jacobian}(q) = \begin{bmatrix}
    \jacobian^T(\Linit_1,q)&\jacobian^T(\Linit_2,q)&...&\jacobian^T(\Linit,q)
\end{bmatrix}^T \in \R^{6\Nsec \times 6\Nsec}$, $\Bar{\mathcal{F}}_{ext}(t)=\begin{bmatrix}
    \mathcal{F}_{ext}(\Linit_1,t)&\mathcal{F}_{ext}(\Linit_2,t)&...&\mathcal{F}_{ext}(\Linit,t)
\end{bmatrix}^T \in \R^{6\Nsec \times 1}$ and $\mathcal{N}(q) \in \R^{6\Nsec \times 6}$ is the gravitational matrix \cite{Renda_2018}. The strain potential energy of the discrete rod is given by 
\begin{align}
    \begin{split}
        \mathcal{U}_d(t)=\frac{1}{2}(\joint-\joint^{*})^TK (\joint-\joint^{*}),
    \end{split}
    \label{eq:discrete_PE}
\end{align}
which can be obtained by applying the PCS assumption to \eqref{eq:contn_PE}.

\section{Inverse kinematics for the discrete Cosserat rod}
\label{sec:PCS_IK}
The inverse kinematics (IK) problem is formulated as follows:
\begin{problem}
    Given the desired end-effector pose of the rod $\g_d(\Linit) \in \SEthree$, find the  strain vector $\joint \in \R^{6 \Nsec}$ such that the forward kinematics in \eqref{eq:poe}  is satisfied for $\arclength = \Linit$.
\end{problem}
Analogous to rigid-link serial manipulators, the inverse kinematics problem for the PCS model admits non-unique solutions owing to the presence of kinematic redundancy. A numerical solution of the IK problem is obtained by using the Newton-Raphson method. The Newton-Raphson algorithm for the PCS model, similar to the one in \cite{park_lynch} for rigid-link manipulators, is given as follows:
\begin{enumerate}
    \item Initialisation: Consider the desired end-effector pose $\g_d (\Linit)$ and an initial guess of the strain vector $\joint^{0} \in \R^{6 \Nsec}$. Set $i = 0$.
    \item Error computation: Evaluate the pose error $\widehat{\mathcal{V}} = \log(\g^{-1}(\Linit)\g_d (\Linit))$, where $\g(\Linit)$ is evaluated by substituting the strain vector $\joint^{i}$ at the $i^{th}$ iteration in \eqref{eq:poe} and $\log$ is the logarithmic map in $\SEthree$ \cite{park_lynch}. Let $\epsilon>0$ be the desired error tolerance. While $||\mathcal{V}||>\epsilon$:
    \begin{enumerate}
        \item Update the strain vector: The strain vector at the next iteration is given by
        \begin{equation}
            \joint^{i+1} = \joint^{i} + \jacobian^{\pseudoinv} (\Linit,\joint^{i}) \mathcal{V},
        \end{equation}
        where $(\cdot)^{\pseudoinv}$ denotes the Moore-Penrose pseudoinverse \cite{park_lynch}.
        \item Increment $i$.
    \end{enumerate}
\end{enumerate}
\section{Quasi-Static control of the discrete Cosserat rod}
\label{sec:control}
The shape regulation and end-effector position tracking for the rod can be achieved by formulating the control problems in the strain space or the task space. We will discuss both control approaches in this section.
\subsection{Strain space control}
\label{sec:strain_control}
Let us consider a new strain vector $\Bar{\joint} = \joint-\joint^{*}$. Thus, the discrete statics of the rod, given in \eqref{eq:quasi_static_contl_eq}, can be written as
\begin{equation}
    \Dot{\Bar{\joint}} = A \Bar{\joint} + {D^{-1}}u(\Bar{\joint},t) + {D^{-1}}\mathcal{N}(\Bar{\joint}) \gravity,
    \label{eq:statics_bar}
\end{equation}
where $A= -D^{-1} K$ and $u(\Bar{\joint},t) = \Bar{\jacobian}^{T}(\Bar{\joint})\Bar{\mathcal{F}}_{ext}(t)$.
\begin{remark}
    It should be noted that $A$ is a diagonal, Hurwitz matrix since the diagonal matrices $D^{-1}$ and $K$ contain strictly positive elements. Therefore, in the absence of any external wrenches, the rod converges to the reference configuration.
\end{remark}
\begin{theorem}
  Consider the desired strain vector of the rod, denoted by $\joint_d(t) \in \R^{6 \Nsec}$. Given a symmetric, positive definite matrix $\Bar{K} \in \R^{6\Nsec \times 6\Nsec}$ and assuming that $\Bar{\jacobian}^{T}(\Bar{\joint})$ is invertible, the feedback control law
  \begin{equation}
    \Bar{\mathcal{F}}_{ext}(t) = \Bar{\jacobian}(\Bar{q})^{-T}\bigl(D(-\Bar{K}\Bar{e} - A \Bar{e} - A \Bar{\joint}_d + \Dot{\Bar{\joint}}_d)
     - \mathcal{N}(\Bar{\joint})\gravity \bigr),
     \label{eq:FB_law_strain}
\end{equation}
  applied to \eqref{eq:statics_bar}, renders the desired trajectory locally asymptotically stable. Here, $\Bar{e}(t) = \Bar{\joint}(t) - \Bar{\joint}_d(t) $ is the strain error.
\end{theorem}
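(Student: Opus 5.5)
Substitute the control law \eqref{eq:FB_law_strain} into the closed-loop system \eqref{eq:statics_bar} and show that the strain error obeys a linear, time-invariant, exponentially stable ODE. The argument is exact feedback linearisation, as for rigid-link manipulators. No Lyapunov analysis of the nonlinear terms should be needed beyond a quadratic function for the resulting linear system.

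\textbf{Step 1: compute the input term.} With $u(\Bar{\joint},t) = \Bar{\jacobian}^{T}(\Bar{\joint})\Bar{\mathcal{F}}_{ext}(t)$ and the invertibility assumption on $\Bar{\jacobian}^{T}(\Bar{\joint})$, the factor $\Bar{\jacobian}^{T}\Bar{\jacobian}^{-T}$ cancels. This gives
\begin{equation*}
u = D(-\Bar{K}\Bar{e} - A\Bar{e} - A\Bar{\joint}_d + \Dot{\Bar{\joint}}_d) - \mathcal{N}(\Bar{\joint})\gravity .
\end{equation*}

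\textbf{Step 2: substitute into \eqref{eq:statics_bar}.} Multiplying by $D^{-1}$, which exists because $D$ is diagonal with strictly positive entries, the gravity term $D^{-1}\mathcal{N}(\Bar{\joint})\gravity$ cancels exactly. We obtain
\begin{equation*}
\Dot{\Bar{\joint}} = A\Bar{\joint} - \Bar{K}\Bar{e} - A\Bar{e} - A\Bar{\joint}_d + \Dot{\Bar{\joint}}_d .
\end{equation*}
Since $\Bar{\joint} - \Bar{e} - \Bar{\joint}_d = 0$, the terms involving $A$ vanish. What remains is $\Dot{\Bar{\joint}} - \Dot{\Bar{\joint}}_d = -\Bar{K}\Bar{e}$, that is, $\Dot{\Bar{e}} = -\Bar{K}\Bar{e}$.

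\textbf{Step 3: stability of the error dynamics.} Since $\Bar{K}$ is symmetric positive definite, take $V(\Bar{e}) = \tfrac12 \Bar{e}^T\Bar{e}$. Then $\Dot V = -\Bar{e}^T\Bar{K}\Bar{e} \le -\lambda_{\min}(\Bar{K})\|\Bar{e}\|^2$, so $\Bar{e}\to 0$ exponentially. Equivalently, $-\Bar{K}$ is Hurwitz. Hence the desired trajectory $\Bar{\joint}_d$, equivalently $\joint_d = \Bar{\joint}_d + \joint^{*}$, is asymptotically (indeed exponentially) stable.

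\textbf{The main obstacle: why the result is only local.} The cancellation in Step 1 holds only where $\Bar{\jacobian}^{T}(\Bar{\joint})$ is invertible. If invertibility is assumed at the desired trajectory, continuity of $\Bar{\jacobian}$ in $\Bar{\joint}$ gives a tube around $\Bar{\joint}_d(t)$ on which the inverse exists. The exponential decay of $\|\Bar{e}\|$, which is monotone by the Lyapunov argument in Step 3, then keeps trajectories that start close enough inside this tube. To make the tube uniform in $t$, I would assume in addition that $\Bar{\joint}_d$ is bounded and stays a uniform distance from the singular set. This is why the statement claims local rather than global asymptotic stability. I would also remark that $\Dot{\Bar{\joint}}_d$ must exist, i.e. $\joint_d$ should be differentiable.
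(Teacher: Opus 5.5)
Your proposal is correct and follows the paper's route. The paper uses the same Lyapunov function $V(\Bar{e}) = \tfrac12\Bar{e}^T\Bar{e}$, substitutes the control law into the error dynamics, and obtains $\Dot V = -\Bar{e}^T\Bar{K}\Bar{e}$, which is your closed loop $\Dot{\Bar{e}} = -\Bar{K}\Bar{e}$ without the intermediate step written out; your exponential-rate bound and your tube argument explaining the word ``local'' through invertibility of $\Bar{\jacobian}^T$ near $\Bar{\joint}_d$ are small additions that the paper leaves implicit.
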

$\quad$ \textit{Proof:} From \eqref{eq:statics_bar},
the error dynamics are given by 
\begin{equation}
    \Dot{\Bar{e}} = A \Bar{e} + D^{-1}u(\Bar{\joint},t) + D^{-1}\mathcal{N}(\Bar{\joint})\gravity + A \Bar{\joint}_d - \Dot{\Bar{\joint}}_d.
    \label{eq:strain_space_err_dyn}
\end{equation}
Consider the Lyapunov function
\begin{equation}
    V(\Bar{e}) = \frac{1}{2}\Bar{e}^T \Bar{e}.
    \label{eq:Lyapunov_strain_space}
\end{equation}
Thus, $V(\Bar{e})>0$ for $\Bar{\joint} \neq \Bar{\joint}_d $ and $V(\Bar{e})=0$ for $\Bar{\joint}= \Bar{\joint}_d $. \\
On differentiating \eqref{eq:Lyapunov_strain_space} with respect to time and substituting the statics given in \eqref{eq:statics_bar} and the feedback control law in \eqref{eq:FB_law_strain}, we get
\begin{equation}
    \Dot{V}(\Bar{e}) = -\Bar{e}^T \Bar{K} {\Bar{e}}.
\end{equation}
Thus, $\Dot{V}(\Bar{e})<0$ for $\Bar{\joint} \neq \Bar{\joint}_d $ and $\Dot{V}(\Bar{e})=0$ for $\Bar{\joint}= \Bar{\joint}_d $, which ensures that the strain error converges to the origin asymptotically. 

%

\begin{remark}
    For set-point control, $\Dot{\Bar{\joint}}_d = 0$.
\end{remark}
\begin{remark}
    The control law in \eqref{eq:FB_law_strain} requires the desired strains for end-effector position tracking, which are computed using the IK algorithm given in Section \ref{sec:PCS_IK}. 
\end{remark}
\subsection{Task space control}
\label{sec:task_control}
The end-effector position tracking control can be designed by transforming the quasi-static system in \eqref{eq:quasi_static_contl_eq} from the strain space to the task space coordinates. In this section, we will consider the external tip wrench as the control input.

Let us consider the minimal coordinates $y(t) = (r(t),\x(\Linit,t)) \in \R^{6}$, where $r(t) = a(t)\theta(t)$ ($a(t) \in \R^{3}, \theta(t) \in \R$) is the exponential representation of the orientation of the cross-section at $\arclength = \Linit$ and $\x(\Linit,t) $ is the end-effector position. Then, we have
\begin{equation}
    \Dot{y} = \begin{bmatrix}
        \Dot{r} \\
        \Dot{x}(\Linit)
    \end{bmatrix} = T(r,\x(\Linit),\rot(\Linit)) \begin{bmatrix}
        \angvel(\Linit) \\
        \linvel(\Linit)
    \end{bmatrix},
    \label{eq:vel_task_space}
\end{equation}
where $T(r,\x(\Linit),\rot(\Linit)) \in \R^{6 \times 6}$ is given as
\begin{equation}
    T(r,\x(\Linit),\rot(\Linit)) = \begin{bmatrix}
        W^{-1}(r)&0\\
        0&\rot(\Linit)
    \end{bmatrix},
\end{equation}
where $W(r) = I_3 - \frac{1-\cos(||r||)}{||r||^2}\Tilde{r} + \frac{||r||-\sin(||r||)}{||r||^3}\Tilde{r}^2$ and $\Tilde{(\cdot)}:\R^3 \to \sothree$ is the hat map \cite{park_lynch} .

On substituting $\velocity = \jacobian \Dot{\joint}$ and \eqref{eq:statics_bar} in \eqref{eq:vel_task_space}, we get
\begin{equation}
    \Dot{y} = \Bar{A}(\Bar{\joint}) \Bar{\joint} + \Bar{B}(\Bar{\joint})  \mathcal{F}_{ext}(\Linit,t) + \Bar{C}(\Bar{\joint})\gravity,
    \label{eq:task_space_eq_of_motion}
\end{equation}
where $\Bar{A}(\Bar{\joint}) = T(r,\x(\Linit),\rot(\Linit)) \jacobian(\Linit,\Bar{\joint})A \in \R^{6 \times 6\Nsec}$, $\Bar{B}(\Bar{\joint})= T(r,\x(\Linit),\rot(\Linit)) \jacobian(\Linit,\Bar{\joint}) D^{-1} \jacobian^T(\Linit,\Bar{\joint}) \in \R^{6 \times 6}$ and $\Bar{C}(\Bar{\joint}) = T(r,\x(\Linit),\rot(\Linit)) \jacobian(\Linit,\Bar{\joint})D^{-1} \mathcal{N}(\Bar{\joint}) \in \R^{6 \times 6}$. The position of the tip of the rod can be written as $\x(L,t) = Py(t)$, where $P = \begin{bmatrix}
    0_{3\times 3}&I_3
\end{bmatrix}$ is the projection operator. Thus, we have
\begin{equation}
    \Dot{\x}(L) = P\Bar{A}(\Bar{\joint}) \Bar{\joint} + P\Bar{B}(\Bar{\joint})  \mathcal{F}_{ext}(\Linit,t) + P\Bar{C}(\Bar{\joint})\gravity.
    \label{eq:task_space_postion_dynamics}
\end{equation}
\begin{theorem}
    Consider a time-varying desired trajectory of the position of the tip of the rod, denoted by $x_d(\Linit,t) \in \R^3$. Given a symmetric, positive definite matrix $\Bar{K}_{t}  \in \R^{3 \times 3}$ and assuming that $P\Bar{B}(\Bar{\joint})$ has full row rank, the feedback control law 
    \begin{equation}
    \begin{aligned}
         \mathcal{F}_{ext}(\Linit,t) = (P\Bar{B}(\Bar{\joint}))^{\pseudoinv} (&-P\Bar{A}(\Bar{\joint}) \Bar{\joint}-P\Bar{C}(\Bar{\joint})\gravity\\&+\Dot{\x}_d(\Linit)-\Bar{K}_{t}e),
    \end{aligned}
\label{eq:FB_law_task}
\end{equation}
applied to \eqref{eq:task_space_postion_dynamics}, renders the desired trajectory locally asymptotically stable. Here, $e(t) = \x(L,t) - x_d(\Linit,t)$ is the position error of the tip of the rod. 
\end{theorem}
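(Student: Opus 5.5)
The plan mirrors the proof of the strain-space result: feedback linearisation followed by a quadratic Lyapunov function in the tip position error.

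\emph{Step 1 (error dynamics).} Differentiate $e(t)=\x(\Linit,t)-x_d(\Linit,t)$ and use \eqref{eq:task_space_postion_dynamics}. This gives
\begin{equation*}
\Dot{e} = P\Bar{A}(\Bar{\joint})\Bar{\joint} + P\Bar{B}(\Bar{\joint})\mathcal{F}_{ext}(\Linit,t) + P\Bar{C}(\Bar{\joint})\gravity - \Dot{\x}_d(\Linit).
\end{equation*}

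\emph{Step 2 (cancellation).} Substitute \eqref{eq:FB_law_task}. The key observation is that $P\Bar{B}(\Bar{\joint})\in\R^{3\times 6}$ has full row rank by assumption. Its Moore--Penrose pseudoinverse is therefore a right inverse, $(P\Bar{B})^{\pseudoinv}=(P\Bar{B})^T\bigl(P\Bar{B}(P\Bar{B})^T\bigr)^{-1}$, so $P\Bar{B}(P\Bar{B})^{\pseudoinv}=I_3$. With this identity the drift term $P\Bar{A}\Bar{\joint}$, the gravity term $P\Bar{C}\gravity$ and the feedforward $\Dot{\x}_d$ all cancel. The closed-loop error dynamics reduce to the linear system $\Dot{e}=-\Bar{K}_t e$.

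\emph{Step 3 (Lyapunov argument).} Take $V(e)=\tfrac12 e^Te$. Along closed-loop trajectories $\Dot{V}=-e^T\Bar{K}_t e\le -\lambda_{\min}(\Bar{K}_t)\|e\|^2<0$ for $e\neq 0$. This gives asymptotic stability of $e=0$, in fact exponential convergence, and hence tracking of $x_d(\Linit,t)$.

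\emph{Main obstacle.} The delicate point is the meaning of ``local''. The full state is $\Bar{\joint}\in\R^{6\Nsec}$, and only three outputs are controlled, so the remaining $6\Nsec-3$ dimensional (zero/internal) dynamics are left free. The argument above only establishes stability of the output error. It is valid on the region of strain space where $P\Bar{B}(\Bar{\joint})$ keeps full row rank and $T$ is well defined, i.e.\ $W(r)$ is invertible, which holds away from $\|r\|=2\pi k$. Locality therefore comes from restricting to a neighbourhood where these conditions hold along the trajectory.

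For boundedness of the internal strains, I would first argue as follows. The strain dynamics \eqref{eq:statics_bar} have Hurwitz $A$, and the input $u=\jacobian^T(\Linit,\Bar{\joint})\mathcal{F}_{ext}$ stays bounded whenever the desired trajectory and its derivative are bounded and the rank condition holds uniformly. Under those conditions the full state remains bounded, giving a local input-to-state stability argument. I expect this step to be the one that needs extra assumptions.
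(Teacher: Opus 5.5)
Your Steps 1--3 are exactly the paper's proof: error dynamics from \eqref{eq:task_space_postion_dynamics}, then $V=\tfrac12 e^Te$ and $\dot V=-e^T\Bar{K}_t e$, with the right-inverse identity $P\Bar{B}(P\Bar{B})^{\pseudoinv}=I_3$ that the paper uses without comment made explicit. Your closing discussion of the internal strain dynamics goes beyond anything the paper claims, and it needs two small corrections: $W(r)$ plays no role because $PT=[\,0_{3\times 3}\;\;\rot(\Linit)\,]$, and your boundedness sketch is circular because $u$ contains the state-dependent term that cancels $P\Bar{A}(\Bar{\joint})\Bar{\joint}$, so a Hurwitz $A$ alone does not bound $\Bar{\joint}$.
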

$\quad$ \textit{Proof:} From \eqref{eq:task_space_postion_dynamics}, the error dynamics are given by

\begin{equation}
    \Dot{e} = P\Bar{A}(\Bar{\joint}) \Bar{\joint} + P\Bar{B}(\Bar{\joint})  \mathcal{F}_{ext}(\Linit,t)+ P\Bar{C}(\Bar{\joint})\gravity - \Dot{\x}_d.
    \label{eq:error_dyn_task_space}
\end{equation}
Consider a Lyapunov function
\begin{equation}
    V(e) = \frac{1}{2}e^T e.
    \label{eq:Lyapunov_task_space}
\end{equation}
Thus, $V(e)>0$ for $\x(\Linit) \neq \x_d(\Linit)$ and $V(e)=0$ for $\x(\Linit) = \x_d(\Linit)$.\\
On differentiating \eqref{eq:Lyapunov_task_space} with respect to time and substituting \eqref{eq:error_dyn_task_space} and \eqref{eq:FB_law_task} in the resulting expression, we get
\begin{equation}
    \Dot{V}(e) = -e^T \Bar{K}_t e.
\end{equation}
Thus, $\Dot{V}(e)<0$ for $\x(\Linit) \neq \x_d(\Linit)$ and $\Dot{V}(e)=0$ for $\x(\Linit) = \x_d(\Linit)$, which ensures that the tip position error converges to the origin asymptotically.
\begin{remark}
    For set-point control, $\Dot{\x}_d (\Linit) = 0$.
\end{remark}

\section{Numerical results}
\label{sec:numerics}
In this section, we present three numerical experiments to validate the IK algorithm and the control laws for the discrete Cosserat rod. We consider a cantilever cylindrical rod of length $0.3$ m and radius $10^{-2} $m. The material properties of the rod are as follows: $\rho= 10^{3}$ $\text{kg}/\text{m}^3,$  $E = 10^{6}$ Pa, $\nu = 0.5$ and $\shearv = 10^2$ Pa s. The rod initially lies along the $e_1$ axis in a straight configuration. Each section of the discrete Cosserat rod modelled by the PCS approach has been spatially discretised into 40 intervals to generate the shape of the rod. 
\subsection{Inverse kinematics for the discrete Cosserat rod}
\label{sec:IK_numerics}
Let us consider the PCS model with two sections and the desired end-effector pose
\begin{equation}
    \g_d(\Linit) = \begin{bmatrix}
        \cos(\pi/4)&-\sin(\pi/4)&0&0.25\\
        \sin(\pi/4)&\cos(\pi/4)&0&0.2\\
        0&0&1&0\\
        0&0&0&1
        
    \end{bmatrix}.
\end{equation}
The IK algorithm in Section \ref{sec:PCS_IK} determines the strain vector $\joint \in \R^{12}$ such that the forward kinematics in \eqref{eq:poe} is satisfied for $\arclength = \Linit$. We consider the initial guesses 
\begin{equation}
    \joint^0_1 = \joint^{*}
\end{equation}
and 
\begin{equation}
    \joint^0_2 = \begin{bmatrix}
0&10&0&1&0&0&\strain_0^T
    \end{bmatrix}^T.
\end{equation}
Figure \ref{fig:IK_solutions} shows the configurations of the rod obtained from the IK algorithm based on $\joint^0_1$ and $\joint^0_2$. The rod admits multiple feasible configurations due to its inherent redundancy, leading to distinct solutions when different initial guesses are provided to the IK algorithm based on the Newton-Raphson method.  The strain potential energies, given in \eqref{eq:discrete_PE}, corresponding to $q^0_1$ and $q^0_2$ are $10.59$ J and $32.65$ J, respectively. 
\begin{figure}[H]
    \centering
    \includegraphics[width=0.5\linewidth]{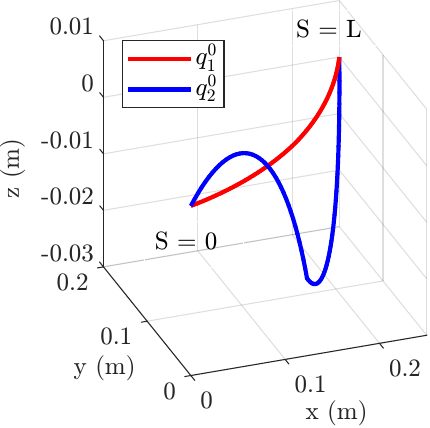}
    \caption{IK solutions for two initial guesses $q^0_1$ and $q^0_2$. }
    \label{fig:IK_solutions}
\end{figure}
\subsection{Shape regulation of the discrete Cosserat rod}
\label{sec:shape_numerics}
We consider the PCS model with two sections and the desired strain
\begin{equation}
\joint_d = \begin{bmatrix}
\joint_{d1}^T&\joint_{d2}^T
\end{bmatrix}^{T},
\end{equation}
where $\joint_{d1}= \ \begin{bmatrix}
    0&-5&0&1&0&0
\end{bmatrix}^T$ and $\joint_{d2}= \ \begin{bmatrix}
    0&10&0&1&0&0
\end{bmatrix}^T$. The desired strain vector has been chosen to generate a symmetrical shape of the rod including two sections of constant curvatures. The desired tip pose can be determined by the forward kinematics in \eqref{eq:poe}. The strain space control in \eqref{eq:FB_law_strain} with $\Bar{K} = 2 I_{12}$ has been implemented to regulate the shape of the rod. 
The closed-loop dynamics in \eqref{eq:statics_bar} have been simulated using the fourth-order Runge-Kutta method (RK4) for $5$ seconds with a time step of $10^{-3}$ seconds. 
Figure \ref{fig:a} and Figure \ref{fig:b} demonstrate the shape of the rod at various time instants and the end-effector external wrench, respectively. The external wrench applied at $\Linit_1$ has not been shown due to space limitations. It can be observed from Figure \ref{fig:a} and Figure \ref{fig:b} that the rod achieves the desired shape after $2.5$ seconds.
\begin{figure}[H]
   \begin{subfigure}{0.45\columnwidth}
    \centering
    \includegraphics[width=\linewidth]{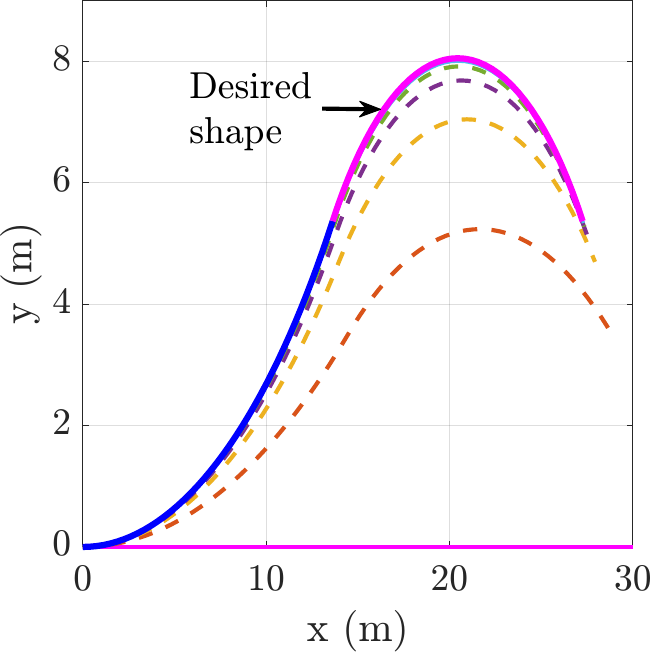}
    \caption{}
    \label{fig:a}
\end{subfigure}
\hfill
\begin{subfigure}{0.45\columnwidth}
    \centering
    \includegraphics[width=\linewidth]{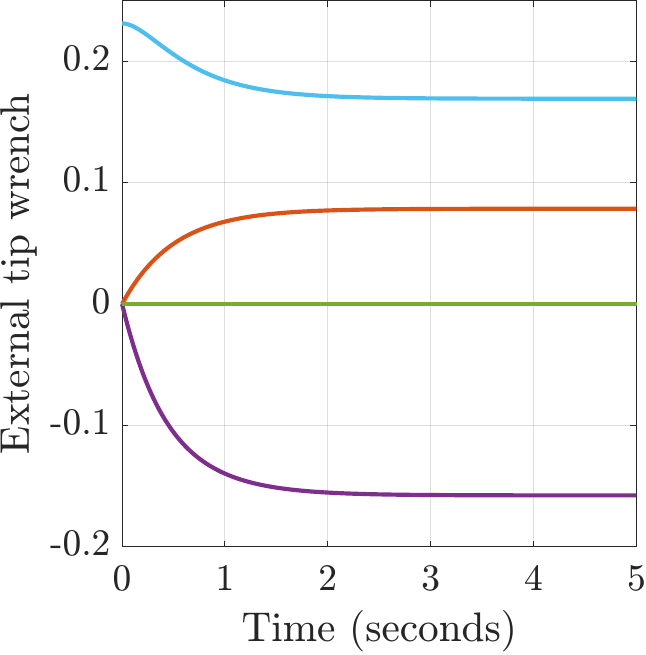}
    \caption{}
    \label{fig:b}
\end{subfigure}

    \caption{(a) Evolution of the shape of the rod for the following time instants: $t = 0$ seconds (pink), $t = 0.5$ seconds (orange), $t = 1$ second (yellow), $t = 1.5$ seconds (violet) and $t = 2$ seconds (green). (b) External tip wrench denoted by $\mathcal{F}_{ext}(\Linit) = \begin{bmatrix}
        m(\Linit)^T&n(\Linit)^T
    \end{bmatrix}^T$, where $m(\Linit) = \begin{bmatrix}
        0&m_y&0
    \end{bmatrix}^T$ Nm and $n(\Linit) = \begin{bmatrix}
        n_x&n_y&n_z
    \end{bmatrix}^T$ N. Here, $m_y,n_x,n_y$ and $n_z$ are shown in red, violet, green and blue, respectively.  }
    \label{fig:case2_shape_reg}
\end{figure}
\subsection{End-effector position tracking control of the discrete Cosserat rod}
\label{sec:shape_numerics}
We consider the PCS model with 4 sections. The objective is to ensure that the tip of the rod tracks a circular trajectory of radius $0.1$ m in the $e_2–e_3$ plane. The strain space and task space control laws in \eqref{eq:FB_law_strain} and \eqref{eq:FB_law_task} have been implemented with $\Bar{K} = 2I_{60}$ and $\Bar{K}_t = 20I_3$, respectively. The closed-loop dynamics for strain and task space control approaches have been simulated using RK4 and ode45 (MATLAB), respectively, for $5$ seconds with a time step of $10^{-3}$ seconds. Figure \ref{fig:task_space_shape} shows the time evolution of the position of the rod and the circular trajectory traced by the tip of the rod obtained from the strain space control. Figure \ref{fig:aC2} and \ref{fig:bC2} demonstrate the external wrench at the tip of the rod corresponding to strain space control and the error convergence due to task space control, respectively. The error converged to the origin in 4.5 seconds due to the strain space control. It was observed that the magnitudes of the various components of the tip external wrench under task space control were lower than the strain space control approach.
\begin{figure}
    \centering
    \includegraphics[width=0.8\linewidth]{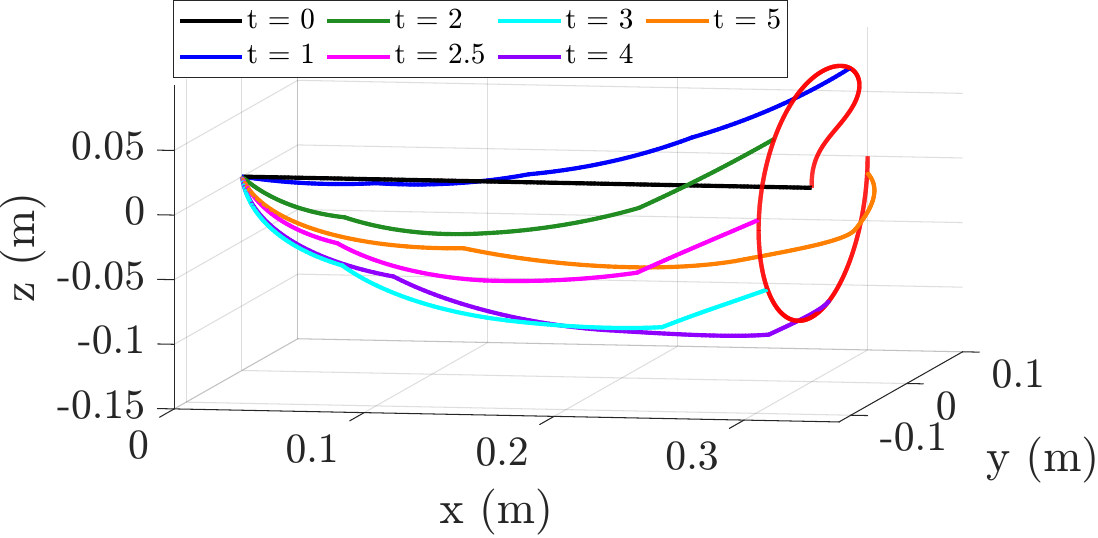}
    \caption{Time evolution of the position of the rod. The circular tip trajectory is shown in red.}
    \label{fig:task_space_shape}
\end{figure}
\begin{figure}[H]
   \begin{subfigure}{0.45\columnwidth}
    \centering
    \includegraphics[width=0.98\linewidth]{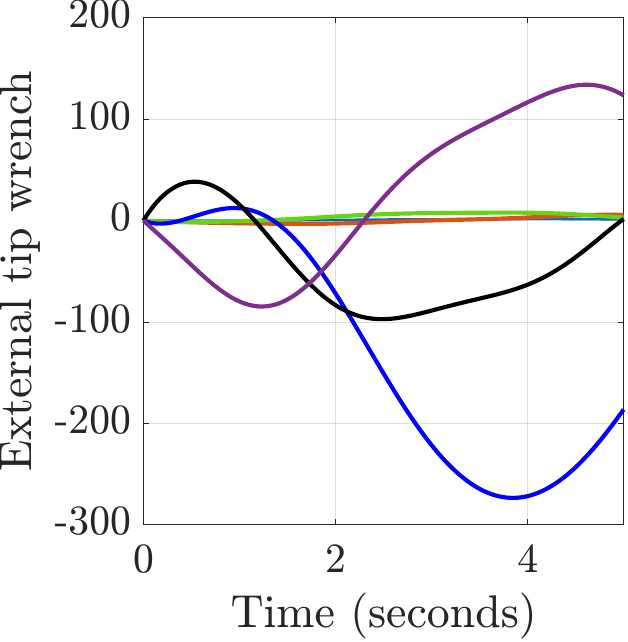}
    \caption{}
    \label{fig:aC2}
\end{subfigure}
\hfill
\begin{subfigure}{0.45\columnwidth}
    \centering
    \includegraphics[width=0.976\linewidth]{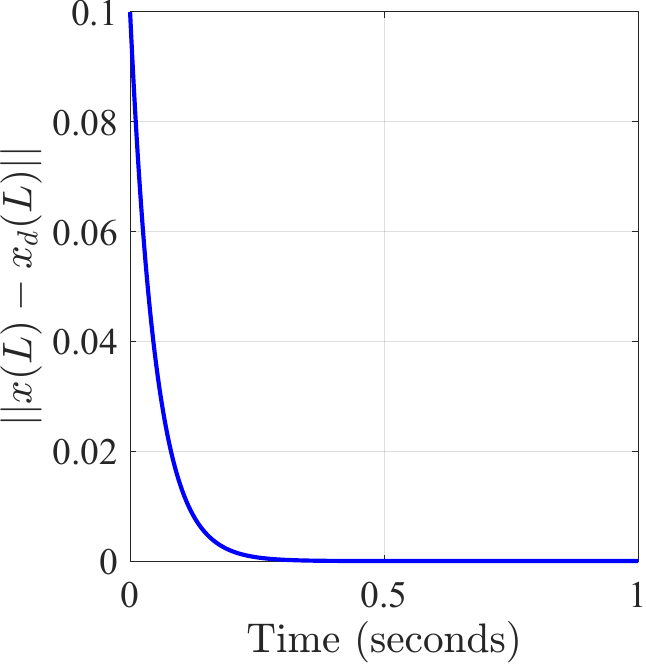}
    \caption{}
    \label{fig:bC2}
\end{subfigure}

    \caption{(a) External tip wrench obtained from the strain space control. External tip wrench is denoted by $\mathcal{F}_{ext}(\Linit) = \begin{bmatrix}
        m(\Linit)^T&n(\Linit)^T
    \end{bmatrix}^T$, where $m(\Linit) = \begin{bmatrix}
        m_x&m_y&m_z
    \end{bmatrix}^T$ Nm and $n(\Linit) = \begin{bmatrix}
        n_x&n_y&n_z
    \end{bmatrix}^T$ N. Here, $m_z,n_x, n_y, n_z$ are shown in green, blue, black and violet, respectively. (b) Error convergence due to task space control.  }
    \label{fig:case2_shape_reg}
\end{figure}

\section{Conclusions and Future Work}
\label{sec:conclusion}
In this paper, we developed feedback control laws in the strain and task spaces for the discrete Cosserat rod. The Piecewise Constant Strain (PCS) discretisation approach presents a simplified model of the Cosserat rod, which is described by a set of nonlinear PDEs evolving on $\SEthree$. The IK algorithm and the corresponding numerical test illustrated the strain redundancy of the Cosserat rod. Shape regulation of the rod was achieved by implementing the feedback control in the strain space. The performance of the proposed feedback control laws employed for end-effector position tracking of the rod was also demonstrated.

Future work will focus on the development of an IK algorithm which provides a
solution with minimum strain potential energy and an optimal control framework for cable-actuated soft robots, along with the incorporation of buckling analysis in the control design.
\section{Acknowledgments}
Srishti Siddharth's work was supported by the Prime Minister’s Research Fellowship (PMRF), funded by the Government of India. Domenico Campolo's work was supported by the Italy-Singapore collaborative DESTRO project under A*STAR grant number R22I0IR124.

\bibliographystyle{IEEEtran}
\bibliography{References2}

\end{document}